\newcommand{\Real}{\mathbb{R}}
\newcommand{\Nat}{\mathbb{N}}
\newcommand{\supp}{\mathop{\mathrm{supp}}\nolimits}
\newcommand{\Dom}{\mathsf{D}}
\newcommand{\eps}{\varepsilon}
\newcommand{\sii}{L^2}
\newcommand{\der}{\mathrm{d}}
\newcommand{\demi}{\mbox{$\frac{1}{2}$}}
\newtheorem{Theorem}{Theorem}
\newtheorem{Proposition}{Proposition}
\theoremstyle{definition}
\numberwithin{equation}{section}
\begin{document}
%
\title{\textbf{\LARGE
Absolute continuity of the spectrum
in a twisted {D}irichlet-{N}eumann waveguide
}}
\author{Ph. Briet,$^{1}$ \ J. Dittrich$^{2}$ \ and \ D. Krej\v{c}i\v{r}\'{\i}k$^{3}$
\bigskip \\
{\small $^1$ Aix-Marseille Universit\'{e}, Universit\'{e} de Toulon, CNRS, CPT, F-13288 Marseille, France}
\\
{\small $^2$ Nuclear Physics Institute, Czech Academy of Sciences, CZ-250 68 \v{R}e\v{z}, Czech Republic}
\\
{\small$^3$ Department of Mathematics, Faculty of Nuclear Sciences and Physical Engineering, }
\\
{\small Czech Technical University in Prague, Trojanova 13, CZ-120 00 Prague 2, Czech Republic}
\\
{\small Electronic mail: briet@univ-tln.fr, dittrich@ujf.cas.cz, david.krejcirik@fjfi.cvut.cz}
}
\date{\small 20 January 2020}
\maketitle
\abstract{Quantum waveguides with the shape of a planar infinite straight strip and combined Dirichlet and Neumann boundary conditions on the opposite
half-lines of the boundary are considered. The absence of the point as well as of the singular continuous  spectrum is proved.}

\section{Introduction}
%
Two-dimensional straight waveguides with combined boundary conditions, classical as well as quantum, were considered in a number of papers
\cite{ELV}--\cite{BC}. Mostly, the existence of isolated eigenvalues was studied. We consider a very special configuration of such quantum waveguides here
for which we show the absence of the eigenvalues, including that embedded in the essential spectrum ones, and the absence of the singular continuous spectrum.

Let~$H$ be the operator that acts as the Laplacian
in a straight strip $\Omega := \Real \times (0,d)$ with $d>0$
and satisfies Dirichlet boundary conditions
on $\partial_D\Omega := [(-\infty,0) \times \{0\}] \cup [(0,\infty)\times\{d\}]$
and Neumann boundary conditions on the other part of the boundary
$\partial_N\Omega := [(-\infty,0) \times \{d\}] \cup [(0,\infty)\times\{0\}]$.
We understand~$H$ as the self-adjoint operator in the Hilbert space $\sii(\Omega)$
generated by the closed form
\begin{equation}\label{form}
  h[\psi] := \int_\Omega |\nabla\psi|^2
  \,, \qquad
  \Dom(h) := \{\psi \in H^1(\Omega)| \
  \psi\upharpoonright\partial_D\Omega = 0\}
  \,.
\end{equation}
One has
\begin{equation*}
  H\psi = -\Delta\psi
  \,, \qquad
  \Dom(H) = \{\psi \in H^1(\Omega) \,| \
  \Delta\psi\in\sii(\Omega) \,, \
  \psi\upharpoonright\partial_D\Omega = 0 \,, \
  \partial_y\psi\upharpoonright\partial_N\Omega = 0
  \}
  \,.
\end{equation*}
Here we denote by $(x,y)$ a generic point in~$\Omega$.

The model belongs to the configurations introduced in~\cite{DKriz1}.
Let $E_n := (2n-1)^2\pi^2/(2d)^2$ with $n\in\Nat^*:=\Nat\setminus\{0\}$
denote the eigenvalues of the Laplacian in $\sii((0,d))$,
subject to a Dirichlet boundary condition at~$0$
and a Neumann boundary condition at~$d$ (or vice versa).
It is easy to see that
\begin{equation*}
  \sigma(H) = \sigma_\mathrm{ess}(H) = [E_1,\infty)
  \,.
\end{equation*}
In~\cite{KK2} it was shown that the operator~$H$ satisfies
a Hardy-type inequality $H-E_1 \geq c/(1+x^2)$
with a positive constant~$c$,
and in~\cite{KZ2}, the consequences on the behavior
of the heat semigroup $e^{-tH}$
for large times $t>0$ were studied.
In particular, it follows that~$E_1$ cannot be an eigenvalue of~$H$.
As the last progress,
the existence of a scattering stationary wave function was established
in \cite{Briet-Dittrich-Soccorsi_2014}.

To complete the study of the model, in this paper,
we study the nature of the essential spectrum and
show that the spectrum of~$H$ is actually purely absolutely continuous.
\begin{Theorem}\label{Thm.main}
One has
$$
  \sigma_\mathrm{p}(H) = \varnothing
  \qquad \mbox{and} \qquad
  \sigma_\mathrm{sc}(H) = \varnothing
  \,.
$$
\end{Theorem}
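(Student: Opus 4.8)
The natural strategy for a waveguide that is "straight at infinity" in two identical exits is a Mourre estimate combined with a conjugate-operator argument. The plan is to exhibit a conjugate operator $A$ (a suitably regularized generator of dilations in the unbounded variable~$x$) for which the commutator $[H,iA]$ is positive modulo compact perturbations on spectral subspaces away from the thresholds $E_n$, and then invoke the limiting absorption principle to conclude that on $(E_1,\infty)\setminus\{E_n:n\ge2\}$ the spectrum is purely absolutely continuous. Because the thresholds form a discrete set, absolute continuity everywhere on $[E_1,\infty)$ follows once the thresholds themselves are excluded as eigenvalues and no singular continuous part can concentrate there. The bottom threshold $E_1$ is already handled: by the Hardy inequality of \cite{KK2} and the heat-semigroup decay of \cite{KZ2}, $E_1$ is not an eigenvalue. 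So the remaining work concerns the higher thresholds and the core Mourre estimate.

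First I would set up the conjugate operator. Let $\phi\in C^\infty(\Real)$ be a smooth monotone function equal to~$0$ on $(-\infty,-1)$ and to~$1$ on $(1,\infty)$, or more simply take $A$ built from $\demi(x p_x + p_x x)$ multiplied by a bounded cutoff in $p_x$ to keep the relevant commutators bounded; one checks that $H$ is of class $C^2(A)$ using that the Dirichlet--Neumann switch is localized near $x=0$ and that the transverse part of $H$ commutes with $A$. The first commutator is, heuristically, $[H,iA] = 2 p_x^2 + (\text{terms supported near } x=0)$, and on the spectral subspace $\mathbf{1}_{[\lambda-\delta,\lambda+\delta]}(H)$ with $\lambda\in(E_n,E_{n+1})$ the operator $p_x^2 = H - (-\partial_y^2)_{\text{transverse}}$ is bounded below by $\lambda - E_n - \delta > 0$ up to a relatively compact remainder, because the transverse operator (Dirichlet at one end, Neumann at the other) has purely discrete spectrum $\{E_n\}$ and the resolvent difference between $H$ and the decoupled operator $p_x^2\otimes I + I\otimes(-\partial_y^2)$ is compact — the coupling is, again, localized at the junction $x=0$. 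This yields the Mourre inequality $\mathbf{1}_I(H)\,[H,iA]\,\mathbf{1}_I(H) \ge c\,\mathbf{1}_I(H) + K$ with $c>0$ and $K$ compact, for $I$ any compact interval avoiding all thresholds. Standard Mourre theory then gives the limiting absorption principle on such $I$, hence $\sigma_{\mathrm{sc}}(H)\cap I = \varnothing$ and $\sigma_{\mathrm p}(H)\cap I$ is locally finite with eigenvalues of finite multiplicity.

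It then remains to rule out eigenvalues altogether, including those that the Mourre argument allows in principle (finitely many in each compact interval away from thresholds) and those that could sit at a threshold $E_n$, $n\ge 2$. Here the plan is a unique-continuation / Carleman-type argument exploiting the geometry: an eigenfunction $\psi$ satisfies $-\Delta\psi = E\psi$ on each half-strip $\Real_\pm\times(0,d)$ with mixed boundary data, so on $\{x>0\}$ one may expand $\psi$ in the transverse eigenbasis $\{\chi_n(y)\}$ to get $\psi(x,y)=\sum_n a_n(x)\chi_n(y)$ with $a_n''=(E_n-E)a_n$; $L^2$-integrability forces $a_n$ to decay exponentially for $E_n>E$ and to vanish identically for $E_n\le E$ (no $L^2$ plane-wave), and similarly on $\{x<0\}$. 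Matching the two expansions across $x=0$ through the $H^1$-condition and the reshuffling of Dirichlet/Neumann data then forces all coefficients to vanish, so $\psi\equiv0$. The threshold case $E=E_n$ is included since the corresponding mode gives $a_n''=0$, i.e. $a_n$ affine, which is not in $L^2(\Real_\pm)$ unless $a_n\equiv0$.

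I expect the main obstacle to be the matching step: carefully implementing the transmission conditions at the junction $x=0$ — where the roles of the two boundary components interchange — and showing that they are incompatible with a nonzero decaying/vanishing solution on both sides. The subtlety is that the two half-strips use different transverse bases (the Dirichlet end is at $y=0$ on one side and at $y=d$ on the other), so one is really gluing $\psi$ across an interface with a nontrivial reflection of the transverse variable, and one must verify that the $H^1(\Omega)$ gluing (continuity of $\psi$ and of the relevant traces) combined with the eigenvalue equation leaves no room for a nonzero solution. A clean way to package this, avoiding delicate mode-matching, is to note that $\psi$ together with its reflection/extension is a global $H^1$ weak solution of $-\Delta\psi=E\psi$ in a domain to which a unique continuation principle across the interface applies, and then use the exponential decay at one infinity to propagate vanishing everywhere; making this rigorous through an appropriate Carleman estimate adapted to the mixed boundary conditions is the technical heart of the proof.
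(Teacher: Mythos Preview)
Your Mourre-theoretic plan for the singular continuous spectrum matches the paper's approach in Section~\ref{Sec.sc}: a dilation-type conjugate operator localized away from $x=0$, a commutator computation giving essentially $2f_n|\partial_x\psi|^2$ plus lower-order terms, and the positivity of $p_x^2$ on spectral intervals between thresholds. One point where the paper is more careful than your sketch: there is no single ``decoupled operator'' $p_x^2\otimes I + I\otimes(-\partial_y^2)$ to compare with, because the transverse boundary conditions are different on the two half-strips. The paper handles this by splitting $f_n = f_n^+ + f_n^-$ and comparing the right (respectively left) piece to the operator $H^+$ (respectively $H^-$) with the corresponding uniform transverse boundary conditions; the remainder terms are then controlled via commutator expansions. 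Your claim that ``the resolvent difference \dots\ is compact'' would need to be replaced by an argument of this type.

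Where you genuinely diverge from the paper is the exclusion of eigenvalues. The paper does \emph{not} use mode-matching, unique continuation, or Carleman estimates. Instead it runs a direct virial-type argument (Section~\ref{Sec.p}): for a putative real eigenfunction~$u$, test the weak eigenvalue equation~\eqref{testeq} against $v_n = \varphi_n(x)\,(2x u_x + u)$, where $\varphi_n$ is a cut-off both near $x=0$ and near $|x|=\infty$. After integration by parts one obtains $0 = I_n + J_n - \lambda(v_n,u)$ with $J_n \to 2\|u_x\|^2$ and the other terms vanishing as $n\to\infty$, hence $u_x=0$ and $u=0$. This works at every energy, thresholds included, in one stroke; the role of the cut-off near $x=0$ is precisely to avoid the regularity loss at the boundary-condition switch, and the paper remarks that this is exactly the mechanism that would fail for the operators $H_\eps$ with overlapping Neumann parts.

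Your mode-matching route is not wrong in principle, but you correctly identify its weak point and do not close it: gluing the two evanescent expansions across $x=0$ uses two \emph{different} transverse bases related by $y\mapsto d-y$, and showing that the $H^1$ transmission conditions force all coefficients to vanish is a nontrivial linear-algebra/analysis step that you leave to a prospective Carleman estimate. The paper's multiplier argument sidesteps this entirely, is short, and requires no separate treatment of thresholds or of the finitely many embedded eigenvalues that a non-strict Mourre estimate would otherwise leave open.
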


The idea of our proof of the absence of the point spectrum is based on
the (here formal) commutator identity
\begin{equation}\label{commutator}
  i[H,A] = -2 \, \partial_x^2
  \,,
\end{equation}
where~$A$ is the dilation operator in the longitudinal direction
acting as
\begin{equation}\label{conjugate}
  A := -\frac{i}{2} \, (x \, \partial_x + \partial_x \, x)
  \,.
\end{equation}
It follows from~\eqref{commutator} that
if there exists $u \in \Dom(H)\cap\Dom(A)$ such that $Hu=\lambda u$
with $\lambda \in \Real$, then
\begin{equation*}
  0 = (u,i[H,A]u) = 2 \|\partial_x u\|^2
  \,,
\end{equation*}
where $(\cdot,\cdot)$ and $\|\cdot\|$ denote
the inner product and norm in $\sii(\Omega)$, respectively.
Consequently, $\partial_x u = 0$ as an element of $\sii(\Omega)$,
and therefore, necessarily $u=0$.
It essentially shows that the point spectrum of~$H$ is empty.
To prove the other statement of Theorem~\ref{Thm.main},
we employ the positivity of the right-hand side of~\eqref{commutator},
apart from the set of thresholds
\begin{equation}\label{thresholds}
  \mathcal{T} := \{E_k\}_{k\in\Nat^*}
  \,,
\end{equation}
with help of the Mourre theory of conjugate operators \cite{Mourre}.

The danger of the formal procedure described above is best illustrated
by observing that the same conclusions are obtained for
the modified operator~$H_\eps$ generated by the form~\eqref{form},
where $\partial_D\Omega$ is replaced by
$\partial_D^\eps\Omega := [(-\infty,-\eps) \times \{0\}]
\cup [(\eps,\infty)\times\{d\}]$ with any real~$\eps$.
But if~$\eps$ is positive (so that the Neumann boundary conditions overlap)
and sufficiently large, then it is known (see~\cite{DKriz1})
that~$H_\eps$ admits (discrete) eigenvalues.
The reason behind this apparent contradiction is the fact that
the function~$Au$ does not necessarily belong to~$\Dom(H)$,
so identity~\eqref{commutator} does not make sense
even when applied to $u \in \Dom(H)$.

We prove the absence of the point and singular continuous spectrum for a very special configuration of the planar straight
quantum waveguide with combined Dirichlet and Neumann boundary conditions. While the specific configuration is essential for
the non-existence of discrete eigenvalues,
the absence of the singular continuous spectrum is a more robust property.
As the used conjugate
operator is localized at infinity (acts as zero near the origin $x=0$), the same proofs can be done for variants of~$H$ modified in a bounded subset of~$\Omega$.
For instance, we could consider an arbitrary finite combination of Dirichlet-Neumann boundary conditions in $(-R,R)\times(0,d)$
or even Robin boundary conditions and perhaps compactly supported potentials. However, the modifications should be such that Proposition \ref{Prop.regularity},
i.e., the bound of $\|\partial_x \psi \| \leq C \|H\psi\|$ used in the estimate of (\ref{fin}), holds. This might be a restriction on the possibility of
the waveguide shape local modifications.

We use the Mourre theory in its original form \cite{Mourre}. More advanced exposition can be found in the book \cite{Amrein}. The first application of the Mourre theory in the context of quantum waveguides is in \cite{Krej_Tiedra_2004};
see also \cite{Tiedra_2006}-\cite{Richard_Tiedra_2016} for further developments.

The organization of the paper is as follows. In order to
justify that the formal argument goes through in our situation $H=H_0$,
in Section~\ref{Sec.p},
we use a cut-off approximation of~$u$ both for large and small~$x$
and proceed by the method of multipliers in the spirit of~\cite{FKV,FKV2}.
It is interesting that this apparently technical regularization
actually gives an insight into why this procedure
for~$H_\eps$ with positive~$\eps$ cannot generally work.
Finally, in Section~\ref{Sec.sc}, we modify~\eqref{conjugate}
to a conjugate operator ``localized at infinity''
and prove a (non-strict) Mourre estimate.

\section{Absence of the point spectrum}\label{Sec.p}
%
Let us assume that there exists an eigenfunction
$u\in \Dom(H) \subset \Dom(h)$
and an eigenvalue $\lambda \in {\mathbb R}$ satisfying
\begin{equation}
\label{eveq}
(H-\lambda)u=0 \, .
\end{equation}
Then for any $v\in \Dom(h)$,
\begin{equation}
\label{testeq}
h(v,u)-\lambda \;\!(v,u) = 0 \, .
\end{equation}
We would like to construct a special $v$ such that from the last equation would follow $u=0$, and so there is no eigenvector.
More precisely, our choice of $v$ would not lie in $\Dom(h)$, so
we need to construct a sequence of regularized functions $v_n \in \Dom(h)$
and obtain the result in the limit.

Without loss of generality, we assume that $u$ is real as $\Re u$ and $\Im u$ satisfy (\ref{eveq}) separately.
As a solution of the differential
equation $-\Delta u - \lambda u=0$, $u\in C^\infty(\Omega)$
(cf., e.g., \cite[Thm.~2.2 of Chapt.~4]{Necas},
together with the Sobolev embedding theorem).
In particular, the derivatives of $u$ and its powers may be calculated as classical.

For the regularization purposes,
let us first define a sequence of functions ($n=2,3,4,\dots$)
\begin{eqnarray}
\label{phin}
\varphi_n(x) := \left\{
\begin{array}{lll}
0 & {\rm for} & x\leq -2n \,, \\
(x+2n)/n & {\rm for} & -2n< x < -n \,, \\
1 & {\rm for} & -n \leq x \leq -n^{-1} \,, \\
n^2(x+n^{-2})/(1-n) & {\rm for} & -n^{-1}<x<-n^{-2} \,, \\
0 & {\rm for} & -n^{-2}\leq x \leq n^{-2} \,, \\
n^2(x-n^{-2})/(n-1) & {\rm for} & n^{-2}<x<n^{-1} \,,\\
1 & {\rm for} & n^{-1} \leq x \leq n \,, \\
(2n-x)/n & {\rm for} & n<x<2n \,,\\
0 & {\rm for} &  x\geq 2n \,,
\end{array}
\right.
\end{eqnarray}
belonging to $H^1(\Real)$ with the derivatives $\varphi_n'$ defined almost everywhere. Then set
\begin{equation}
v_n(x,y) := \varphi_n(x) (2xu_x(x,y)+u(x,y)) \, .
\end{equation}
Now
\begin{align*}
v_{nx} &= \varphi_n'(x)(2xu_x(x,y)+u(x,y)) + \varphi(x) (3 u_x(x,y) + 2 x u_{xx}(x,y)) \,, \\
v_{ny} &= \varphi_n(x)(2xu_{xy}(x,y)+u_y(x,y)) \,.
\end{align*}
Evidently, $v_n\in \Dom(h)$ and so satisfies (\ref{testeq}).
Remembering the properties of $\Dom(H)$ \cite{DKriz1}, $u$,$u_x$,$u_y$, $u_{xx}+u_{yy} \in L^2(\Omega)$, and
$u_{xx},u_{xy},u_{yy}\in L^2({\rm supp\,}\varphi_n \times (0,d))$, we write
\begin{equation}\label{huvn1}
h(v_n,u)
=\int_\Omega \varphi_n'(x)(2x u_x^2 + u u_x)\, \mathrm{d}x \, \mathrm{d}y
 + \int_\Omega \varphi_n(x)(3 u_x^2 + 2 x u_x u_{xx} +2 x u_y u_{xy} + u_y^2)\, \mathrm{d}x\, \mathrm{d}y \,.
\end{equation}
Integration by parts with respect to $x$, and also with respect to $y$ in one case, gives
\begin{align*}
  \int_\Omega \varphi_n' u u_x \, \mathrm{d}x \, \mathrm{d}y
  &= - \int_\Omega \varphi_n (u_x^2 + u u_{xx}) \, \mathrm{d}x \, \mathrm{d}y \\
  &= - \int_\Omega \varphi_n (u_x^2 +u \Delta u -u u_{yy})\, \mathrm{d}x \, \mathrm{d}y \\
  &= - \int_\Omega \varphi_n (u_x^2 + u_y^2 + u \Delta u)\, \mathrm{d}x \, \mathrm{d}y \,,
\\
  \int_\Omega \varphi_n (2 x u_x u_{xx} + 2 x u_y u_{xy}) \, \mathrm{d}x \, \mathrm{d}y
  &= \int_\Omega \varphi_n x (u_x^2 + u_y^2)_x \, \mathrm{d}x \, \mathrm{d}y \\
  &= -\int_\Omega \varphi_n' x (u_x^2 + u_y^2) \, \mathrm{d}x \, \mathrm{d}y
  - \int_\Omega \varphi_n (u_x^2 + u_y^2) \, \mathrm{d}x \, \mathrm{d}y \,.
\end{align*}
Inserting to (\ref{huvn1}), we get $h(v_n, u)=I_n + J_n$ with
\begin{equation*}
I_n := \int_\Omega \varphi_n'(x) x ( u_x^2 - u_y^2) \, \mathrm{d}x \, \mathrm{d}y
\,, \qquad
J_n := \int_\Omega \varphi_n(x) (u_x^2-u_y^2 -u \Delta u) \, \mathrm{d}x \, \mathrm{d}y
\,.
\end{equation*}
By similar calculations,
\begin{eqnarray*}
  (v_n,u)
  = \int_\Omega \varphi_n (2 x u u_x + u^2) \, \mathrm{d}x \, \mathrm{d}y
  = \int_\Omega \varphi_n (x (u^2)_x+u^2) \, \mathrm{d}x \, \mathrm{d}y
  = - \int_\Omega \varphi_n'(x) x u^2 \, \mathrm{d}x \, \mathrm{d}y \,.
\end{eqnarray*}

Looking at the definition (\ref{phin}),
it is clear that,
$|\varphi_n|\leq 1$,
$\lim_{n\to \infty} \varphi_n(x) =1$ for every $x\not= 0$
and $|x \varphi_n'(x)| \leq 2$ for every $x \in \Real$.
Furthermore, $\varphi_n'(x)\not = 0$ only for
$$x\in (-2n,-n) \cup (-n^{-1},-n^{-2}) \cup (n^{-2},n^{-1}) \cup (n,2n) \,.$$
Consequently,
\begin{equation*}
\lim_{n\to\infty} I_n = 0 \,,
\qquad
\lim_{n\to\infty} J_n =\int_\Omega (u_x^2-u_y^2-u \Delta u) \, \mathrm{d}x \, \mathrm{d}y
=2 \|u_x\|^2 \,,
\qquad
\lim_{n\to\infty} (v_n,u) = 0 \,,
\end{equation*}
by the dominated convergence.
As
$$
  0=h(v_n,u) - \lambda \;\! (v_n,u)
  = I_n + J_n - \lambda \;\! (v_n,u)
  \xrightarrow[n\to\infty]{} 2 \;\! \|u_x\|^2 \,,
$$
it follows that $u_x=0$, so $u$ is necessarily $x$-independent.
Now $u=0$ because $u\in L^2(\Omega)$, and there is no non-zero eigenfunction and no eigenvalue satisfying (\ref{eveq}).
So the relation $\sigma_\mathrm{p}(H) = \varnothing$
from Theorem \ref{Thm.main} is proved.

\section{Absence of the singular continuous spectrum}\label{Sec.sc}
%
Given any $E \in \Real$ and $\delta>0$,
$P_\delta$~will denote the spectral projection of~$H$
onto the interval $(E-\delta,E+\delta)$.
We restrict to $E \not\in \mathcal{T}$,
where the set~$\mathcal{T}$ is introduced in~\eqref{thresholds},
and choose~$\delta$ so small that
$(E-\delta,E+\delta) \cap \mathcal{T} = \varnothing$.
Let~$H$ be as above and let~$A$ be a self-adjoint operator
to be specified in a moment
(it will be a regularization of~\eqref{conjugate}).
To apply the abstract theorem of~\cite{Mourre}
and thus conclude the absence of the singular continuous spectrum of~$H$,
it is enough to verify the following properties:

\begin{enumerate}
\item[(a)]
The intersection $\Dom(A) \cap \Dom(H)$ is a core of $H$.
\item[(b)]
The unitary group $e^{itA}$ leaves the domain of~$H$ invariant and
\begin{equation}\label{invariant}
  \forall \psi \in \Dom(H)
  \,, \qquad
  \sup_{|t| < 1}\|H e^{itA}\psi\| < \infty .
\end{equation}
\item[(c)]
The form
$$
  \dot{b}[\psi] := i (H\psi,A\psi) - i (A\psi,H\psi)
  \,, \qquad
  \Dom(\dot{b}) := \Dom(A) \cap \Dom(H)
  \,,
$$
is bounded from below and closable.
Moreover, the operator~$B$ associated with the closure~$b$ of~$\dot{b}$
satisfies
$$
  \Dom(B) \supset \Dom(H)
  \,.
$$
\item[(d)]
The operator defined by the form
$$
  \dot{c}[\psi] := i (B\psi,A\psi) - i (A\psi,B\psi)
  \,, \qquad
  \Dom(\dot{c}) := \Dom(A) \cap \Dom(H)
  \,,
$$
extends to an operator
$$
  C \in \mathscr{B}(\Dom(H),\Dom(H)^*)
  \,,
$$
$\Dom(H)$ being equipped with the graph norm and $\Dom(H)^*$ being its dual space.
\item[(e)]
There exists a positive number~$\alpha$
and a compact operator~$K$ on~$\sii(\Omega)$
such that
$$
  P_\delta B P_\delta \geq \alpha P_\delta + P_\delta K P_\delta
  \,.
$$
\end{enumerate}

Note that~$B$ (respectively, $C$) can be interpreted as a realization
of the commutator $i[H,A]$
(respectively, the double commutator $i[i[H,A],A]$).

\subsection{The Hamiltonian}
We begin with establishing some new results about the operator~$H$
which will be needed later.
\begin{Proposition}\label{Prop.density}
For every positive~$\epsilon$,
the set
\begin{multline*}
  \mathcal{C} := \big\{ \varphi \in \Dom(H) \,| \
  \exists \phi \in C_0^\infty(\Real^2) \,, \quad
  \\
  \varphi\upharpoonright((-\infty,-\epsilon)\cup (\epsilon,+\infty))\times(0,d)
  =  \phi\upharpoonright((-\infty,-\epsilon)\cup (\epsilon,+\infty))\times(0,d)
  \big\}
\end{multline*}
is a core of~$H$.
\end{Proposition}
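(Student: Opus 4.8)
The plan is to show that $\mathcal{C}$ is dense in $\Dom(H)$ equipped with the graph norm. Since $C_0^\infty(\Real^2)$ restricted to the complement of the strip $(-\epsilon,\epsilon)\times(0,d)$ is a rich class and the functions in $\mathcal{C}$ are only constrained to agree with such a smooth function \emph{outside} the bounded set $(-\epsilon,\epsilon)\times(0,d)$, the definition of $\mathcal{C}$ imposes no condition on the behaviour near $x=0$; the boundary conditions contained in $\Dom(H)$ are already built in, and away from $x=0$ the Dirichlet and Neumann parts of the boundary each lie on a single horizontal line, so smoothness up to the boundary is unobstructed by the mixing of boundary conditions. Thus the main point is a cut-off and mollification argument.

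First I would fix $\varphi \in \Dom(H)$ and a smooth longitudinal cut-off $\chi_R$ with $\chi_R=1$ on $|x|\le R$, $\chi_R=0$ on $|x|\ge 2R$, and $|\chi_R'| \le C/R$, $|\chi_R''|\le C/R^2$. Then $\chi_R\varphi \to \varphi$ in $\sii(\Omega)$ and, since $-\Delta(\chi_R\varphi) = \chi_R(-\Delta\varphi) - 2\chi_R'\partial_x\varphi - \chi_R''\varphi$, one gets $H(\chi_R\varphi)\to H\varphi$ using $\varphi,\partial_x\varphi,\Delta\varphi\in\sii(\Omega)$ and the decay of the cut-off derivatives; note $\chi_R\varphi\in\Dom(H)$ because $\chi_R$ depends only on $x$ and so does not affect the boundary conditions. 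Hence it suffices to approximate compactly supported $\varphi\in\Dom(H)$.

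Next, for such $\varphi$ supported in $(-L,L)\times[0,d]$, I would approximate it in the graph norm by functions of the required form. The function $\varphi$ is smooth up to the boundary on the two half-strips $(-L,-\epsilon/2)\times(0,d)$ and $(\epsilon/2,L)\times(0,d)$ by elliptic regularity (the boundary there is flat and carries a single boundary condition on each horizontal segment), but one must be careful near $x=0$ where the boundary condition jumps. The cleanest route is: extend $\varphi$ across the Neumann parts of $\partial\Omega$ by even reflection and across the Dirichlet parts by odd reflection to obtain a function $\tilde\varphi\in H^1(\Real^2)$ with $\Delta\tilde\varphi\in\sii$ locally, supported in a bounded set; mollify $\tilde\varphi$ at scale $\eta$ only in the region $\{|x|>\epsilon\}$ — more precisely, write $\tilde\varphi = \zeta\tilde\varphi + (1-\zeta)\tilde\varphi$ with $\zeta$ a smooth cut-off equal to $1$ on $|x|\le\epsilon/2$ and $0$ on $|x|\ge\epsilon$, keep $(1-\zeta)\tilde\varphi$ wherever one wants agreement with a $C_0^\infty(\Real^2)$ function and mollify it there, and leave $\zeta\tilde\varphi$ as the ``interior'' part. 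The mollified piece, being a convolution of a compactly supported function, lies in $C_0^\infty(\Real^2)$, so the resulting approximant $\varphi_\eta$ agrees outside $(-\epsilon,\epsilon)\times(0,d)$ with a $C_0^\infty(\Real^2)$ function; it belongs to $\Dom(H)$ because the reflection symmetry of the mollifier preserves the even/odd parity, hence the Neumann/Dirichlet conditions, on the flat boundary pieces; and $\varphi_\eta\to\varphi$ together with $\Delta\varphi_\eta\to\Delta\varphi$ in $\sii$ as $\eta\to0$ by standard properties of mollifiers applied to $\tilde\varphi$ and $\Delta\tilde\varphi$ on the set where mollification takes place.

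The main obstacle is the bookkeeping near $x=0$: one has to make sure that cutting off with $\zeta$ and then \emph{not} mollifying the piece $\zeta\tilde\varphi$ does not destroy the requirement that the approximant lie in $\Dom(H)$, and in particular that $\Delta\varphi_\eta$ converges in $\sii$ despite the commutator terms $[\Delta,\zeta]$. These commutator terms are supported in $\epsilon/2\le|x|\le\epsilon$, where both $\tilde\varphi$ and its mollification are smooth and converge in $H^1$, so the $\zeta$-gluing produces errors tending to $0$ in the graph norm; this needs to be checked but is routine. Conversely, the parity-preserving property of even/odd reflection under a symmetric mollifier is exactly what guarantees the mixed boundary conditions survive, and this is the only place where the special geometry (flat boundary, single condition per segment away from the origin) is used. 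Assembling the two approximation steps — first $\chi_R$, then $\varphi_\eta$ — yields a sequence in $\mathcal{C}$ converging to an arbitrary $\varphi\in\Dom(H)$ in the graph norm, which is the assertion.
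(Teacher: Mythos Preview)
Your approach is essentially the paper's: a longitudinal cut-off at infinity, a near-origin/away-from-origin splitting, even/odd reflection across the Neumann/Dirichlet pieces, and mollification with a symmetric kernel so that the boundary conditions survive. The one imprecision is the claim that the global reflection $\tilde\varphi$ lies in $H^1(\Real^2)$: the reflection type along $y=0$ switches at $x=0$ (odd for $x<0$, even for $x>0$), so the extension is in general discontinuous across $\{x=0,\ y<0\}$ and not globally $H^1$, let alone with $\Delta\tilde\varphi\in\sii$. The paper sidesteps this by splitting \emph{before} reflecting---writing $\psi=\vartheta_1\psi+\vartheta_2\psi+\psi_3$ with $\vartheta_1\psi$, $\vartheta_2\psi$ supported in a single half-strip each (where the boundary conditions are fixed and the functions are actually $H^2$), reflecting and mollifying those pieces separately, and leaving the middle piece $\psi_3$ untouched. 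Your argument becomes correct with the same fix, since you only ever use $\tilde\varphi$ on $|x|>\epsilon/2$ anyway.
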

\begin{proof}
Let $\psi$ be an arbitrary function from $\Dom(H)$. We show that it can be approximated by functions from $\mathcal{C}$.
Let $\vartheta_1$ and $\vartheta_2$ be functions from $C^\infty(\Real)$ such that $0\leq \vartheta_1, \vartheta_2 \leq 1$ and
\begin{align*}
\vartheta_1(x) &= 1 \quad {\rm for} \quad x<-\epsilon \,,
&\vartheta_1(x) &= 0 \quad {\rm for} \quad x>-\frac{\epsilon}{2} \,,
\\
\vartheta_2(x) &= 1 \quad {\rm for} \quad x>\epsilon \,,
&\vartheta_2(x) &= 0 \quad {\rm for} \quad x<\frac{\epsilon}{2} \,.
\end{align*}
Let us define
$$
\psi_1 = \vartheta_1 \psi \,,
\qquad \psi_2 = \vartheta_2 \psi \,,
\qquad \psi_3 = (1-\vartheta_1 -\vartheta_2)\psi \,,
$$
so that
$$
\psi = \psi_1 + \psi_2 +\psi_3 \qquad {\rm and} \qquad \psi_1, \psi_2, \psi_3 \in \Dom(H) \,.
$$
It is sufficient to approximate $\psi_1$ and $\psi_2$ by functions from $\mathcal C$. It is known that $\psi_1,\psi_2 \in H^2(\Omega)$; see~\cite{DKriz1}.
Let us extend them to $H^2(\Real \times (-d,2d))$ first. To keep the boundary conditions, let us choose extensions symmetric with respect to the Neumann parts of the boundary and antisymmetric with respect to the Dirichlet parts. Note that in half-planes where the functions are zero, it means the same. So we define
\begin{align*}
\psi_1(x,y)&=-\psi_1(x,-y) \quad {\rm for} \quad -d<y<0 \,,
& \psi_1(x,y)&=\psi_1(x,2d-y) \quad {\rm for} \quad d<y<2d \,,
\\
\psi_2(x,y)&=\psi_2(x,-y) \quad {\rm for} \quad -d<y<0 \,,
& \psi_2(x,y)&=-\psi_2(x,2d-y) \quad {\rm for} \quad d<y<2d \,.
\end{align*}
The extended functions are in $H^2(\Real \times (-d,0))$ and $H^2(\Real \times (d,2d)$). As the traces of functions and the normal derivatives on the boundaries of $\Omega$ from both sides coincide, the extended functions are in $H^2(\Real \times (-d,2d))$. In fact, we used a special case of \cite[Thm~4.26]{Adams} and its proof. Then, extend them to $H^2(\Real^2)$ which is possible over the straight boundary.

Furthermore, we need to approximate $\psi_1$ and $\psi_2$ by $C^\infty$ functions. We use the standard mollifications, see \cite[Lem.~3.15]{Adams},
$$
J_\eta \psi_k(x)=\int_{\Real^2}j_\eta(x-y) \psi_k(y) \, \der y \qquad (k=1,2),
$$
where
$$
j_\eta(x)=\eta^{-2}j(x/\eta) \,,\qquad
j\in C^\infty_0(B(1)) \,,
\qquad j\geq 0 \,,
\qquad \int_{\Real^2}j(x) d^2 x = 1 \,.
$$
Let us consider only $0<\eta<\min(d,\epsilon/2)$
for $\supp j_\eta \subset B(\eta)$. Then, $J_\eta\psi_{1,2} \in H^2(\Real \times (-d,2d))$ and approach $\psi_{1,2}$ there as $\eta \to 0^+$.
These function are in $\Dom(H)$ if they satisfy the corresponding boundary conditions at $\partial \Omega$ which are easily verified for the usual symmetric choice of $j_\eta(x,y)=j_\eta(x,-y)$.

Let us show it here for the case of Neumann boundary condition on $(0,+\infty) \times \{0\}$. The trace exists as $J_\eta\psi_2\in H^2(\Real\times(-d,2d))$
and we can simply calculate
\begin{eqnarray*}
\partial_2 J_\eta\psi_2(x,0)=\int_{\Real^2} \partial_2 j_\eta(x-x',-y')\psi_2(x',y') \; \mathrm{d}x' \; \mathrm{d}y' = \int_{\Real^2} \partial_{2} j_\eta(x-x',-y')\psi_2(x',-y') \; \mathrm{d}x' \; \mathrm{d}y'
\\
= \int_{\Real^2} \partial_2 j_\eta(x-x',y') \psi_2(x',y') \; \mathrm{d}x' \; \mathrm{d}y'
=-\int_{\Real^2} \partial_2 j_\eta(x-x',-y') \psi_2(x',y') \; \mathrm{d}x' \; \mathrm{d}y' = - \partial_2 J_\eta \psi_2 (x,0)
\end{eqnarray*}
and the required boundary condition $\partial_2 J_\eta\psi_2 (x,0)$ at $x>0$ follows. The other boundary conditions are verified similarly.

Finally, let $\Phi_R \in C_0^\infty(\Real^2)$, $\Phi_R(x,y)=\Phi_{1R}(x) \Phi_2(y)$,
where $\Phi_{1R}$ is a suitable function with the support in $(-R-1,R+1)$ and the value $1$ in $(-R,R)$ while $\Phi_2$ is a function with the support in $(-d/2,3d/2)$ and the value $1$ in $(-d/4,5d/4)$.
Then $\phi = \Phi_R ( J_\eta \psi_1 + \psi_3 + J_\eta \psi_2) \in \mathcal{C}$ is an arbitrarily good approximation of $\psi$ in $\Dom(H)$
with the graph norm choosing $\eta$ sufficiently small and $R$ large enough. So ${\mathcal C}$ is a core of $H$.
\end{proof}
\begin{Proposition}\label{Prop.regularity}
There exists a positive constant~$C$ such that,
for every $\psi \in \Dom(H)$,
\begin{equation}\label{bound1}
  \|\partial_x\psi\| \leq C \|H\psi\|
  \,, \qquad
  \|\partial_y\psi\| \leq C \|H\psi\|
  \,.
\end{equation}
Moreover, for every positive~$\epsilon$,
there exists a positive constant~$C_\epsilon$ such that,
for every $\psi \in \Dom(H)$,
\begin{equation}\label{bound2}
  \|\chi_\eps\partial_x^2\psi\| \leq C_\epsilon \|H\psi\|
  \,,
  \qquad
  \|\chi_\eps\partial_x\partial_y\psi\| \leq C_\epsilon \|H\psi\|
  \,.
\end{equation}
where $\chi_\epsilon$ denotes the characteristic function
of the set $\Omega \setminus [(-\epsilon,\epsilon)\times(0,d)]$.
\end{Proposition}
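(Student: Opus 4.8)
The plan is to prove the two pairs of estimates in Proposition~\ref{Prop.regularity} by combining elliptic regularity with the known $H^2$-structure of $\Dom(H)$ recorded in~\cite{DKriz1}. For the global first-order bounds~\eqref{bound1}, I would start from the form identity $\|\nabla\psi\|^2 = h[\psi] = (\psi,H\psi) \le \|\psi\|\,\|H\psi\|$, valid for every $\psi\in\Dom(H)$. This alone gives $\|\partial_x\psi\|^2 + \|\partial_y\psi\|^2 \le \|\psi\|\,\|H\psi\|$, so it remains to control $\|\psi\|$ by $\|H\psi\|$. Since $0\notin\sigma(H) = [E_1,\infty)$ with $E_1>0$, the operator $H$ is boundedly invertible and $\|\psi\| \le E_1^{-1}\|H\psi\|$. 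Feeding this back yields $\|\nabla\psi\|^2 \le E_1^{-1}\|H\psi\|^2$, i.e.\ both inequalities in~\eqref{bound1} with $C = E_1^{-1/2}$.

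For the localized second-order bounds~\eqref{bound2}, the idea is to cut off away from the ``junction'' $\{x=0\}$, where the boundary conditions switch, and to apply interior/boundary elliptic estimates on the region $\Omega\setminus[(-\epsilon,\epsilon)\times(0,d)]$. Concretely, fix a smooth function $\zeta_\epsilon$ on $\Real$ with $\zeta_\epsilon = 1$ on $\{|x|\ge\epsilon\}$ and $\zeta_\epsilon = 0$ on $\{|x|\le\epsilon/2\}$, and consider $w := \zeta_\epsilon\psi$. On the supports involved the waveguide has a \emph{purely} Dirichlet boundary on one side and a purely Neumann boundary on the other (on each half-strip separately), so standard elliptic regularity up to the boundary for the Laplacian with these constant-type boundary conditions on the straight strip applies. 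One computes $-\Delta w = \zeta_\epsilon(-\Delta\psi) - 2\zeta_\epsilon'\partial_x\psi - \zeta_\epsilon''\psi =: g$, and since $\psi\in H^2$ locally (hence $w\in H^2$) with the correct boundary conditions, the $H^2$-estimate on the two half-strips gives $\|\partial_x^2 w\| + \|\partial_x\partial_y w\| + \|\partial_y^2 w\| \le C'(\|g\| + \|w\|)$. Because $\zeta_\epsilon$, $\zeta_\epsilon'$, $\zeta_\epsilon''$ are bounded, $\|g\| \le \|H\psi\| + 2\|\zeta_\epsilon'\|_\infty\|\partial_x\psi\| + \|\zeta_\epsilon''\|_\infty\|\psi\|$, and all three terms are $\lesssim \|H\psi\|$ by~\eqref{bound1} and $\|\psi\|\le E_1^{-1}\|H\psi\|$. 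Finally $\chi_\epsilon \le$ (indicator of $\supp\zeta_\epsilon$) pointwise and $\zeta_\epsilon\equiv 1$ there, so $\|\chi_\epsilon\partial_x^2\psi\| = \|\chi_\epsilon\partial_x^2 w\| \le \|\partial_x^2 w\|$ and likewise for the mixed derivative, yielding~\eqref{bound2} with a constant $C_\epsilon$ that blows up as $\epsilon\to 0$.

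The one point that needs genuine care — the main obstacle — is the elliptic $H^2$-estimate up to the boundary on the half-strips, because the strip is unbounded in $x$ and its corners at $(0,0)$ and $(0,d)$ (and the point where Dirichlet meets Neumann) are excluded only by the cutoff, not by a compactness argument. I would handle the $x$-unboundedness either by a partition of unity into unit cells $(\ell,\ell+1)\times(0,d)$ with a constant independent of $\ell$ (using translation invariance of the estimate on a fixed reference cell), or more simply by invoking the already-cited fact from~\cite{DKriz1} that elements of $\Dom(H)$ lie in $H^2$ away from the junction together with a closed-graph / uniform-boundedness argument to get the constant. The boundary near $\{x=0,\ y=0\}$ and $\{x=0,\ y=d\}$ causes no trouble since $w$ vanishes in a neighbourhood of $x=0$. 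I would also double-check that $w$ inherits both the Dirichlet and the Neumann conditions from $\psi$ (immediate, as $\zeta_\epsilon$ depends only on $x$), which is what licenses applying the mixed Dirichlet--Neumann elliptic estimate on each half-strip.
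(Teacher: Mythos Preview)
Your treatment of~\eqref{bound1} coincides with the paper's: both use $h[\psi]=(\psi,H\psi)$ together with $\|\psi\|\le E_1^{-1}\|H\psi\|$ to get $C=E_1^{-1/2}$.

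For~\eqref{bound2} the approaches diverge. The paper gives a self-contained argument by the difference-quotient technique of \cite[Sec.~6.3]{Evans}: one tests the weak identity~\eqref{weak} with $v=-\partial_x^{-h}(\xi^2\partial_x^h\psi)$, where $\xi$ is precisely your cut-off $\zeta_\epsilon$, integrates by parts, applies Young's inequality, and sends $h\to 0$. This delivers both the qualitative $H^2$-regularity away from the junction and the quantitative bound in one stroke, with no appeal to~\cite{DKriz1}. Your route instead imports the qualitative fact $\zeta_\epsilon\psi\in H^2(\Omega)$ from~\cite{DKriz1} and then upgrades it to a uniform estimate, either by a translation-invariant partition of unity on the half-strips or by the closed-graph theorem applied to $\chi_\epsilon\partial_x^2:\Dom(H)\to L^2(\Omega)$. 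Both fixes you propose for the unboundedness obstacle are valid (for closed graph: convergence in the graph norm implies convergence of $\partial_x\psi_n$ in $L^2$, hence of $\chi_\epsilon\partial_x^2\psi_n$ in the sense of distributions, which identifies the limit). Your argument is shorter but less self-contained, and its constant is non-explicit, whereas the paper's computation tracks $C_\epsilon$ through $\|\xi'\|_\infty$ and $E_1$.
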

\begin{proof}
Given any $g \in \sii(\Omega)$, let $\psi \in \Dom(H)$
be the unique solution of the resolvent equation $H\psi=g$
[the problem is well defined because $0 \not\in \sigma(H)$].
The weak formulation reads
\begin{equation}\label{weak}
  \forall v \in \Dom(h)
  \,, \qquad
  (\partial_x v,\partial_x \psi) + (\partial_y v,\partial_y \psi)
  = (v,g)
  \,.
\end{equation}
Choosing $v:=\psi$ in~\eqref{weak}, we get
\begin{equation*}
  E_1 \|\psi\|^2 \leq
  \|\partial_x \psi\|^2 + \|\partial_y \psi\|^2
  = (\psi,g)
  \leq \|\psi\| \|g\|
  \,.
\end{equation*}
Consequently, $\|\psi\| \leq E_1^{-1} \|g\|$,
$\|\partial_x\psi\|^2 \leq E_1^{-1} \|g\|^2$
and $\|\partial_y\psi\|^2 \leq E_1^{-1} \|g\|^2$.
This proves~\eqref{bound1}.

To establish~\eqref{bound2}, we follow the ideas
of standard elliptic regularity (see \cite[Sec.~6.3]{Evans}).
Let $\xi \in C_0^\infty(\Real)$ be such that $0 \leq \xi \leq 1$,
$\xi(x)=0$ if $|x|\leq \epsilon/2$ and $\xi(x)=1$ if $|x|\geq \epsilon$.
Now we choose
$v:=-\partial_x^{-h}(\xi^2\partial_x^h \psi)$ in~\eqref{weak},
where
\begin{equation*}
  \partial_x^h\varphi(x,y) := \frac{\varphi(x+h,y)-\varphi(x,y)}{h}
  \,,
\end{equation*}
is the difference quotient of $\varphi \in \sii(\Omega)$ in the direction~$x$.
With an abuse of notation (followed also at other places in the paper),
we denote by the same symbol~$\xi$ the function on~$\Real$
as well as $\xi \otimes 1$ on~$\Omega$.
Choosing $|h| \leq \epsilon/2$, we have $v \in \Dom(h)$
(it is only important to ensure the Dirichlet boundary conditions).
Using the integration-by-parts formula for the difference quotients,
\eqref{weak} yields
\begin{equation}\label{id.reg}
 | \, \|\xi\partial_x^h\partial_x \psi\|^2
  + 2 \, (\xi'\partial_x^h\psi,\xi\partial_x^h\partial_x\psi)
  + \|\xi\partial_x^h\partial_y \psi\|^2 |
  = |(v,g)|
  \leq \|v\| \|g\|
  \,.
\end{equation}
To deal with the right-hand side, we write
\begin{equation*}
  \|v\|^2
  = \|\partial_x^{-h}(\xi^2\partial_x^h \psi)\|^2
  \leq \|\partial_x(\xi^2\partial_x^h \psi)\|^2
  \leq 2 \|\xi^2\partial_x^h\partial_x \psi\|^2
  + 2 k_\epsilon^2 \|\partial_x^h\psi\|^2
  \leq 2 \|\xi\partial_x^h\partial_x \psi\|^2
  + 2 k_\epsilon^2 \|\partial_x\psi\|^2
  \,,
\end{equation*}
where $\|(\xi^2)'\|_\infty \leq 2 \|\xi'\|_\infty =: k_\epsilon$.
On the left-hand side, we use
\begin{equation*}
  2 \, |(\xi'\partial_x^h\psi,\xi\partial_x^h\partial_x\psi)|
  \leq 2 \|\xi'\partial_x^h\psi\| \|\xi\partial_x^h\partial_x\psi\|
  \leq k_\epsilon \|\partial_x\psi\| \|\xi\partial_x^h\partial_x\psi\|
  \,.
\end{equation*}
Consequently, \eqref{id.reg}~yields
\begin{equation*}
\begin{aligned}
  (1 - \delta_1 - 2 \delta_2) \|\xi\partial_x^h\partial_x \psi\|^2
  + \|\xi\partial_x^h\partial_y \psi\|^2
  &\leq k_\epsilon^2 \left(\frac{1}{\delta_1} + 2 \delta_2\right) \|\partial_x\psi\|^2
  + \frac{1}{\delta_2} \|g\|^2
  \\
  &\leq \left[
  k_\epsilon^2 \left(\frac{1}{\delta_1} + 2 \delta_2\right) E_1^{-1}
  + \frac{1}{\delta_2} \right]
  \|g\|^2
\end{aligned}
\end{equation*}
with any positive numbers~$\delta_1$ and~$\delta_2$,
where the second inequality employs~\eqref{bound1}
with the explicitly given constant.
Choosing~$\delta_1$ and~$\delta_2$ sufficiently small,
the left-hand side is a sum of two non-negative terms
and the desired claims follow
by further estimating
$
  \|\xi\partial_x^h\partial_x \psi\|^2
  \geq \|\chi_\epsilon\partial_x^h\partial_x \psi\|
$
(and similarly for the other norm)
and by sending~$h$ to~$0$.
\end{proof}

\subsection{The conjugate operator}
\label{conjugate.operator}
Let $f_1^\pm \in C^\infty(\Real)$ be such that $0 \leq f_1^\pm \leq 1$,
$f_1^\pm(x)=0$ if $\pm x \leq 1$ and $f_1^\pm(x)=1$ if $\pm x \geq 2$.
For every $n \geq 1$, we define $f_n^\pm(x) := f_1^\pm(x/n)$
and $F_n^\pm(x) := \int_0^x f_n^\pm(\xi) \, \der\xi$.
Finally, we set $f_n := f_n^- + f_n^+$ and $F_n := F_n^- + F_n^+$.
Notice that $F_n^\pm(x) \sim x$ as $x \to \pm\infty$
and that $\|(f_n^\pm)^{(m)}\|_\infty=n^{-m}\|(f_1^\pm)^{(m)}\|_\infty$.

With these preliminaries, we define
\begin{equation}\label{conjugate.parallel}
  \dot{A}_\parallel
  := -\frac{i}{2} \, \big(F_n(x) \, \partial_x + \partial_x \, F_n(x)\big)
  \,, \qquad
  \Dom(\dot{A}_\parallel) := C_0^\infty(\Real)
  \,,
\end{equation}
where~$F_n$ is understood as an operator of multiplication. The following considerations are full analogy of \cite[Props.~6.1--2]{Briet-Kovarik-Raikov_2014}.
However, as there is a difference in the cut-off at zero
instead of the cut-off at infinity,
we give the proofs here.

The operator $\dot{A}_\parallel$ is essentially self-adjoint in $L^2(\Real)$. This is a consequence of \cite[Prop.~7.3.6(a)]{Amrein} and its proof. In our special case,
it can also be seen directly that the deficiency indices of $\dot{A}_\parallel$ are zero due to the properties of function $F_n$.

Let~$A_\parallel$ denote the (self-adjoint) closure of~$\dot{A}_\parallel$.
Using the Hilbert-space identification
$\sii(\Omega) \cong \sii(\Real) \times \sii((0,d))$,
we set
\begin{equation}\label{conjugate.tensor}
  A := A_\parallel \otimes 1
  \,,
\end{equation}
which is a self-adjoint operator in $\sii(\Omega)$.

For any fixed $x \in \Real$,
consider the initial-value problem
\begin{equation}\label{initial}
\left\{
\begin{aligned}
  \frac{\der}{\der t} u(t,x) &= F_n(u(t,x)) \,,
  \\
  u(0,x) &= x \,.
\end{aligned}
\right.
\end{equation}
By classical results (see \cite[Thm.~4.1 of Chapt.~V]{Hartman}),
\eqref{initial}~admits a unique global solution in $C^\infty(\Real^2)$.
One has
\begin{equation}\label{wt0}
  \partial_x u (t,x) = e^{\int_0^t f_n(u(s,x))\, \der s} >0
\end{equation}
for every $t \in \Real$ and $x \in \Real$.
Define
\begin{equation}\label{wt}
  (W(t)\varphi )(x,y) := |\partial_xu(t,x)|^{1/2}\, \varphi(u(t,x), y)
  \,.
\end{equation}
\begin{Proposition} \label{Unitary_group}
$W$ is a strongly continuous unitary group on $\sii(\Omega)$ with
the generator \eqref{conjugate.tensor}.
\end{Proposition}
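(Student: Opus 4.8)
The plan is to reduce the statement to one dimension. Since the transformation~\eqref{wt} acts only on the first variable, $W(t) = W_\parallel(t)\otimes 1$ under the identification $\sii(\Omega)\cong\sii(\Real)\otimes\sii((0,d))$, where $(W_\parallel(t)g)(x) := |\partial_x u(t,x)|^{1/2}\, g(u(t,x))$. So it will be enough to show that $W_\parallel$ is a strongly continuous unitary group on $\sii(\Real)$ with generator~$A_\parallel$; then, by the usual behaviour of the functional calculus under tensoring with the identity, $W(t) = e^{itA_\parallel}\otimes 1 = e^{it(A_\parallel\otimes 1)} = e^{itA}$, which is exactly the assertion, $A$ being defined by~\eqref{conjugate.tensor}. (One could equally argue directly in $\sii(\Omega)$, carrying~$y$ along as a passive parameter.)

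First I would record that, by uniqueness for the autonomous Cauchy problem~\eqref{initial}, its solution is a flow: $u(0,x)=x$ and $u(t+s,x) = u(t,u(s,x))$ for all $s,t,x\in\Real$; differentiating the latter in~$x$ and invoking~\eqref{wt0} gives the cocycle identity $\partial_x u(t+s,x) = \partial_x u(t,u(s,x))\,\partial_x u(s,x)$, whence $W_\parallel(t+s) = W_\parallel(t)W_\parallel(s)$ and $W_\parallel(0)=I$ follow by direct substitution. Next, for fixed~$t$ the map $x\mapsto u(t,x)$ is a $C^\infty$ diffeomorphism of~$\Real$ --- strictly increasing by~\eqref{wt0}, and onto with smooth inverse $u(-t,\cdot)$ since the flow is global --- so the change of variables $\xi = u(t,x)$, $\der\xi = \partial_x u(t,x)\,\der x$, yields $\|W_\parallel(t)g\| = \|g\|$ for every $g\in\sii(\Real)$ and $W_\parallel(t)^{-1} = W_\parallel(-t)$, i.e.\ each $W_\parallel(t)$ is unitary. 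For strong continuity, since the $W_\parallel(t)$ are uniformly bounded I would check it only at $t=0$ on the dense subspace $C_0^\infty(\Real)$: for $g$ with $\supp g = K$ and $|t|\le 1$ one has $\supp(W_\parallel(t)g)\subset\bigcup_{|s|\le 1}u(s,K) =: K'$ compact, and on~$K'$ the integrand $\big|\,|\partial_x u(t,x)|^{1/2}g(u(t,x))-g(x)\,\big|^2$ converges pointwise to zero as $t\to 0$ and is bounded uniformly in~$t$, so dominated convergence applies; the group law then propagates continuity to all~$t$, and Stone's theorem provides a self-adjoint generator~$\mathcal{A}_\parallel$ with $W_\parallel(t) = e^{it\mathcal{A}_\parallel}$.

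To identify~$\mathcal{A}_\parallel$, I would take $g\in C_0^\infty(\Real)$ and expand $u(\cdot,x)$ and $|\partial_x u(\cdot,x)|^{1/2}$ to first order at $t=0$, using $\partial_t u(0,x) = F_n(x)$ and $\partial_t\partial_x u(0,x) = f_n(x)$ read off from~\eqref{initial} and~\eqref{wt0}; this shows that the difference quotient $t^{-1}\big(W_\parallel(t)g - g\big)$ converges --- first pointwise, then, using the uniform control of the Taylor remainders and of $u$, $\partial_x u$ on the compact set~$K'$ above, hence again by dominated convergence, in $\sii(\Real)$ --- to $F_n g' + \tfrac12 f_n g = i\dot A_\parallel g$. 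Thus $C_0^\infty(\Real)\subset\Dom(\mathcal{A}_\parallel)$ and $\mathcal{A}_\parallel\supset\dot A_\parallel$; since $\dot A_\parallel$ is essentially self-adjoint with closure~$A_\parallel$ (as recalled after~\eqref{conjugate.parallel}) and $\mathcal{A}_\parallel$ is closed, passing to closures gives $\mathcal{A}_\parallel\supset A_\parallel$, and as both operators are self-adjoint they must coincide, so $\mathcal{A}_\parallel = A_\parallel$. I expect the main obstacle to be precisely this last upgrade from pointwise to $\sii$-norm convergence of the difference quotients: everything else is routine manipulation of the flow and the change-of-variables formula, whereas that step genuinely relies on the smooth dependence on initial data in~\eqref{initial} to furnish remainder bounds uniform on the relevant compact set.
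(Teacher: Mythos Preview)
Your argument is correct and follows essentially the same route as the paper's proof: establish the flow property to get the group law, use the change of variables induced by the diffeomorphism $x\mapsto u(t,x)$ for unitarity, check strong continuity on smooth compactly supported functions by dominated convergence, and identify the generator by differentiating at $t=0$ on $C_0^\infty$ and invoking the essential self-adjointness of $\dot A_\parallel$. The only cosmetic differences are that you factor through the one-dimensional group $W_\parallel$ (which the paper treats as an equivalent option), you obtain surjectivity of $u(t,\cdot)$ via the inverse flow $u(-t,\cdot)$ rather than via the lower bound $\partial_x u(t,x)\ge e^{-|t|}$, and you spell out the dominated-convergence justification for the $L^2$ limit of the difference quotients more explicitly than the paper does.
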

\begin{proof}
It is clear from (\ref{initial}) that $u(t,0)=0$ for $t\in \Real$, and $u(t,x)\gtrless 0$ for $x\gtrless 0$. Using the properties of $f_n$,
the relation (\ref{wt0}) is now improved to
$$
\partial_x u (t,x) \geq e^{-|t|}
$$
for every $t,x \in \Real$ and
$$
\lim_{x\to \pm \infty}u(t,x) = \pm \infty .
$$
The unitarity of $W(t)$ then follows from its construction (\ref{wt}).

Equation (\ref{initial}) together with the unicity of its solution implies the relation
$$
u(t,u(s,x))=u(t+s,x) \,,
$$
from which the group property
$$
W(t)W(s)=W(t+s)
$$
follows.

It is sufficient to verify the strong continuity of $W(t)$ at $t=0$.
The continuity of $W(t)\varphi$ is easily seen for
$\varphi \in C_0^\infty (\Omega)$
and then extends to $\varphi \in \sii (\Omega )$
by the density argument as $\| W(t)\|=1$.

Direct calculations show
$$
\frac{d}{dt} W(t) \varphi_{|t=0} =  i (\dot{A}_\parallel \otimes 1) \varphi
$$
for $\varphi \in C^\infty_0 (\Omega)$. As the generator of the group $W$ is self-adjoint, it equals $A$ necessarily.
\end{proof}

The following proposition establishes property~(b).
\begin{Proposition}
$\Dom(H)$  is stable under the action of $e^{it A}$ and~\eqref{invariant} holds.
\end{Proposition}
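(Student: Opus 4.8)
The plan is to exploit the structure of the flow~$u$ a little more explicitly than in the unitarity proof. First I would record the elementary but crucial observation that both~$f_n$ and~$F_n$ vanish \emph{identically} on $[-n,n]$: indeed $f_n^+(x)=f_1^+(x/n)=0$ for $x\le n$ and $f_n^-(x)=0$ for $x\ge -n$, so $F_n^\pm$ are the (zero) constant there as well. Hence the whole interval $[-n,n]$ consists of equilibria of~\eqref{initial}, and therefore $u(t,x)=x$ and, by~\eqref{wt0}, $\partial_x u(t,x)=1$ for all $t\in\Real$ whenever $|x|\le n$; in other words $W(t)$ acts as the identity on $\{|x|\le n\}\times(0,d)$, in particular on a full neighbourhood of the two corner points $(0,0)$ and $(0,d)$. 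I would also note the uniform bounds obtained by differentiating~\eqref{initial} and~\eqref{wt0} in~$x$: for $|t|<1$ one has $e^{-2}\le\partial_x u(t,x)\le e^{2}$, while $|\partial_x^2 u(t,x)|$ and $|\partial_x^3 u(t,x)|$ are bounded by a constant depending only on~$n$ (through $\|f_1'\|_\infty$, $\|f_1''\|_\infty$); moreover $u(t,\cdot)$ is an increasing $C^\infty$-diffeomorphism of~$\Real$ fixing~$0$, so it maps each of the sets $\{|x|\le n\}$, $\{|x|\ge n\}$, $(-\infty,0)$, $(0,\infty)$ onto itself.

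Next I would check the boundary-condition bookkeeping. Since $(x,y)\mapsto(u(t,x),y)$ is, for each fixed~$t$, a smooth diffeomorphism of~$\overline\Omega$ with first derivative bounded above and below, and the weight $|\partial_x u|^{1/2}$ is smooth, positive and bounded with bounded derivatives, $W(t)$ is a bounded bijection of $H^1(\Omega)$; and because the diffeomorphism keeps~$x$ and~$u(t,x)$ on the same half-line, it maps $\partial_D\Omega$ onto $\partial_D\Omega$ and $\partial_N\Omega$ onto $\partial_N\Omega$, so $W(t)\Dom(h)=\Dom(h)$. For the Neumann conditions I would use that, away from the two corners, any $\psi\in\Dom(H)$ is in $H^2_{\mathrm{loc}}(\overline\Omega)$ by interior and boundary elliptic regularity, hence so is $W(t)\psi$, and $\partial_y\big(W(t)\psi\big)(x,y)=|\partial_x u(t,x)|^{1/2}(\partial_y\psi)(u(t,x),y)$ vanishes on $\partial_N\Omega$ because $\partial_y\psi$ does.

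The core is a single $L^2$-estimate for $\Delta\big(W(t)\psi\big)$. On $\{|x|\le n\}$ we have $W(t)\psi=\psi$, so there $\Delta\big(W(t)\psi\big)=\Delta\psi=-H\psi$. On $\{|x|\ge n\}$ I would expand by the Leibniz rule: since $\partial_y^2$ commutes with~$W(t)$, the function $\Delta\big(W(t)\psi\big)(x,y)$ is a linear combination of $\psi$, $\partial_x\psi$, $\partial_x^2\psi$ and $\partial_y^2\psi$ evaluated at $(u(t,x),y)$, with coefficients built from $|\partial_x u|^{1/2}$ and $\partial_x^k u$ with $k\le 3$ — all bounded by a constant $C_n$ for $|t|<1$ by the first paragraph. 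Taking the $L^2\big(\{|x|\ge n\}\times(0,d)\big)$-norm and changing variables $z=u(t,x)$ (whose Jacobian $\partial_x u$ and its reciprocal are bounded, and which maps $\{|x|\ge n\}$ onto $\{|z|\ge n\}$) bounds it by $C_n\big(\|\psi\|+\|\partial_x\psi\|+\|\chi_n\partial_x^2\psi\|+\|\chi_n\partial_y^2\psi\|\big)$. By Proposition~\ref{Prop.regularity}, $\|\partial_x\psi\|\le C\|H\psi\|$ and $\|\chi_n\partial_x^2\psi\|\le C_n\|H\psi\|$; writing $\chi_n\partial_y^2\psi=\chi_n(\Delta\psi-\partial_x^2\psi)$ and using $\|\psi\|\le E_1^{-1}\|H\psi\|$, the whole right-hand side is $\le C_n'\|H\psi\|$. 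Combining the two regions gives $\Delta\big(W(t)\psi\big)\in L^2(\Omega)$ with $\|\Delta(W(t)\psi)\|\le(1+C_n')\|H\psi\|$; together with the second paragraph this shows $W(t)\psi\in\Dom(H)$. Applying the same with~$-t$ yields $\Dom(H)=W(t)W(-t)\Dom(H)\subset W(t)\Dom(H)\subset\Dom(H)$, so $\Dom(H)$ is stable, and the bound, being uniform for $|t|<1$, is precisely~\eqref{invariant}.

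The only genuinely delicate point is the compatibility between the change of variables hidden inside $W(t)$ and the fact that elements of $\Dom(H)$ are merely $H^2$ \emph{away from} the corners (Proposition~\ref{Prop.regularity}): differentiating $W(t)\psi$ twice near a corner would not be legitimate. This is exactly circumvented by the first paragraph, since $W(t)$ is the identity on a neighbourhood of both corners, so every differentiation in the estimate above is performed only where $\psi$ enjoys full $H^2$ control; what remains is the routine bookkeeping of the flow derivatives and the change-of-variables factor, which I would not spell out in detail.
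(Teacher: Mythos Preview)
Your argument is correct and follows essentially the same route as the paper's proof. Both hinge on the observation that, because $F_n$ vanishes on $[-n,n]$, the flow satisfies $u(t,x)=x$ and $\partial_x u(t,x)=1$ near the origin, so the lack of global $H^2$-regularity of $\psi$ at the corners never interferes with the computation of $\Delta\big(W(t)\psi\big)$; away from the corners Proposition~\ref{Prop.regularity} supplies the needed control of $\partial_x^2\psi$. The only cosmetic difference is that the paper writes a single global identity for $\Delta e^{itA}\varphi$ and then notes that the coefficient $(\partial_x u)^2-1$ of the dangerous $\partial_1^2\varphi$ term vanishes on a neighbourhood of $x=0$, whereas you split the strip into $\{|x|\le n\}$ (where $W(t)=\mathrm{id}$) and $\{|x|\ge n\}$ (where full $H^2$ control is available) and bound each piece separately---equivalent presentations of the same mechanism.
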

\begin{proof}
Let $\varphi \in \Dom(H)$. We need to check that then $e^{itA} \varphi = W(t)\varphi \in \Dom(H)$, for every $t\in\Real$.
We have seen in the previous proof that the map $ \Real \ni x \mapsto u(t,x) \in \mathbb R$
leaves  $ \mathbb R ^+$ and $\mathbb R^-$ invariant. So $e^{itA} \varphi$ satisfies the required boundary conditions
at $\partial_D\Omega$ and $\partial_N\Omega$.

Equation (\ref{wt0}) implies that the derivatives $\partial_x u$, $\partial_x^2 u$, and $\partial_x^3 u$ are bounded in $x$ for a fixed $t$.
Then $e^{itA} \varphi \in H^1(\Omega)$. Let us calculate
\begin{eqnarray*}
\Delta e^{itA}\varphi = W(t)\Delta\varphi + (\partial_x u)^{\frac{1}{2}}((\partial_x u)^2 -1) \partial_1^2 \varphi(u,y) + 2 (\partial_x u)^{\frac{1}{2}} (\partial_x^2 u) \partial_1\varphi(u,y)
\\
+(\partial_x u)^{\frac{1}{2}} \left(\frac{1}{2} (\partial_x u)^{-1} \partial_x^3 u - \frac{1}{4}(\partial_x u)^{-2} (\partial_x^2 u)^2 \right)\varphi(u,y)
\,.
\end{eqnarray*}
Every terms on the right-hand side are clearly square integrable, possibly  except of the second one. However, $\partial_x u(t,x)=1$ for $|x| < e^{-|t|} n$ according to (\ref{wt0}) and the properties of $f_n$.
So the second term is also square integrable as $\partial_1^2 \varphi \in L^2(\Omega \setminus ((u(t,-e^{-|t|}n),u(t, e^{-|t|}n))\times (0,d)))$, see~\cite{DKriz1}.
Now the relation $e^{itA}\varphi \in D(H)$ is proved.
Further, the continuity of the used bounds with respect to~$t$ implies~\eqref{invariant}.
\end{proof}

The following proposition establishes property~(a).
\begin{Proposition}
$\Dom(A) \cap \Dom(H)$  is dense in $\Dom(H)$ for the graph norm associated with~$H$.
\end{Proposition}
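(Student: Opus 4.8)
The plan is to run the standard regularization of $\varphi\in\Dom(H)$ by the unitary group $W(t)=e^{itA}$ constructed in Proposition~\ref{Unitary_group}. Fix $g\in C_0^\infty(\Real)$ with $\supp g\subset(-1,1)$, $g\geq 0$, $\int_\Real g=1$, and for $0<\epsilon<1$ set $g_\epsilon(t):=\epsilon^{-1}g(t/\epsilon)$ and
\begin{equation*}
  \psi_\epsilon := \int_\Real g_\epsilon(t)\,e^{itA}\varphi\,\der t
  \qquad(\text{Bochner integral in }\sii(\Omega)).
\end{equation*}
I will show that $\psi_\epsilon\in\Dom(A)\cap\Dom(H)$ and that $\psi_\epsilon\to\varphi$ in the graph norm of~$H$; since $\varphi\in\Dom(H)$ is arbitrary, this is the assertion. (If convenient, by Proposition~\ref{Prop.density} one may even reduce to $\varphi\in\mathcal{C}$, for which all terms below are manifestly $\sii$ functions; the argument works for general $\varphi$ as well.) The ingredient that drives everything is the claim that the map $t\mapsto He^{itA}\varphi$ is strongly continuous on~$\Real$ with $M:=\sup_{|t|<1}\|He^{itA}\varphi\|<\infty$. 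Finiteness of~$M$ is exactly~\eqref{invariant}, and continuity is read off from the explicit formula for $\Delta e^{itA}\varphi=-He^{itA}\varphi$ derived in the proof of the preceding proposition: as $t\to0$ one has $W(t)\Delta\varphi\to\Delta\varphi$ by strong continuity of~$W$ on $\sii(\Omega)$, while each remaining term is controlled by $\partial_x u\to1$, $\partial_x^2 u\to0$, $\partial_x^3 u\to0$ (locally uniformly, by~\eqref{initial} and~\eqref{wt0}) together with the global bounds $e^{-|t|}\leq\partial_x u\leq e^{|t|}$; the only delicate term contains the a priori non-$\sii$ factor $\partial_1^2\varphi(u,y)$, but there the prefactor $(\partial_x u)^2-1$ vanishes except where $|x|$ is of order~$n$, on which $\partial_x^2\varphi\in\sii$ by Proposition~\ref{Prop.regularity}. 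Hence all correction terms tend to~$0$ in $\sii(\Omega)$, which gives the continuity (the same bound at general~$t$ follows from the group property).

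Granting this, $\psi_\epsilon\in\Dom(A)$: substituting $\tau=t+s$,
\begin{equation*}
  \frac{e^{isA}\psi_\epsilon-\psi_\epsilon}{s}
  = \int_\Real\frac{g_\epsilon(\tau-s)-g_\epsilon(\tau)}{s}\,e^{i\tau A}\varphi\,\der\tau
  \xrightarrow[s\to0]{} -\int_\Real g_\epsilon'(\tau)\,e^{i\tau A}\varphi\,\der\tau,
\end{equation*}
because $s^{-1}(g_\epsilon(\cdot-s)-g_\epsilon(\cdot))\to-g_\epsilon'$ uniformly with supports in a fixed compact set and $\|e^{i\tau A}\varphi\|=\|\varphi\|$; thus $\psi_\epsilon\in\Dom(A)$ with $A\psi_\epsilon=i\int_\Real g_\epsilon'(\tau)e^{i\tau A}\varphi\,\der\tau$. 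Moreover $\psi_\epsilon\in\Dom(H)$: the integrand $t\mapsto g_\epsilon(t)He^{itA}\varphi$ is continuous, bounded by $M\,g_\epsilon(t)$, and supported in $(-1,1)$, hence Bochner integrable, so closedness of~$H$ yields $\psi_\epsilon\in\Dom(H)$ and $H\psi_\epsilon=\int_\Real g_\epsilon(t)He^{itA}\varphi\,\der t$. Finally, using $\int g_\epsilon=1$ and $g_\epsilon\geq0$,
\begin{equation*}
  \|\psi_\epsilon-\varphi\|\leq\sup_{|t|\leq\epsilon}\|e^{itA}\varphi-\varphi\|,
  \qquad
  \|H\psi_\epsilon-H\varphi\|\leq\sup_{|t|\leq\epsilon}\|He^{itA}\varphi-H\varphi\|,
\end{equation*}
and both right-hand sides tend to~$0$ as $\epsilon\to0^+$ by strong continuity of~$W$ and of $t\mapsto He^{itA}\varphi$, respectively. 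Hence $\psi_\epsilon\to\varphi$ in the graph norm with $\psi_\epsilon\in\Dom(A)\cap\Dom(H)$, which proves the proposition.

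The main obstacle is precisely the strong $\sii$-continuity of $t\mapsto He^{itA}\varphi$, i.e. estimating the delicate $\partial_1^2\varphi$-term uniformly in~$t$ near $t=0$. Should a direct estimate be awkward, there is a softer route: $t\mapsto He^{itA}\varphi$ is at least \emph{weakly} continuous — immediate from self-adjointness of~$H$, strong continuity of~$W$, density of $\Dom(H)$ in $\sii(\Omega)$ and the bound~$M$ — so $H\psi_\epsilon\rightharpoonup H\varphi$; combined with $\psi_\epsilon\to\varphi$ in $\sii(\Omega)$ this gives $\psi_\epsilon\rightharpoonup\varphi$ weakly in $\Dom(H)$, and Mazur's lemma then produces convex combinations of the $\psi_\epsilon$, still lying in $\Dom(A)\cap\Dom(H)$, converging to~$\varphi$ in the graph norm.
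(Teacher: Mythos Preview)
Your argument is correct. The mollification $\psi_\epsilon=\int g_\epsilon(t)e^{itA}\varphi\,\der t$ does land in $\Dom(A)\cap\Dom(H)$ and converges to~$\varphi$ in graph norm; the delicate point---the $((\partial_x u)^2-1)\partial_1^2\varphi(u,y)$ term---is handled exactly as you say, because $u(t,x)=x$ on $|x|<n$ (the ODE has $F_n\equiv0$ there), so the prefactor vanishes where $\partial_x^2\varphi$ fails to be in $\sii$, and for $|x|\geq n$ Proposition~\ref{Prop.regularity} applies after the change of variable. Your Mazur fallback via weak continuity is also fine and in fact makes the whole proof soft.

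That said, the paper's route is entirely different and much shorter: it simply invokes Proposition~\ref{Prop.density}, which exhibits the explicit core~$\mathcal{C}$ of~$H$ consisting of functions that coincide with a $C_0^\infty(\Real^2)$ function outside $(-\epsilon,\epsilon)\times(0,d)$, and then observes that $\mathcal{C}\subset\Dom(A)$ (because $F_n\equiv0$ on $(-n,n)$, so the action of~$A$ sees only the smooth compactly supported part). Your approach is the standard abstract Nelson-type argument that deduces condition~(a) directly from condition~(b); it is more robust (it would work for any pair $H,A$ satisfying~(b), with no model-specific core needed) but it re-derives by hand the strong continuity of $t\mapsto He^{itA}\varphi$, which costs you the explicit computation. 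The paper trades that computation for the earlier investment in Proposition~\ref{Prop.density}.
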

\begin{proof}
The claim follows from Proposition~\ref{Prop.density}
and the fact that $\mathcal{C} \subset \Dom(A)$.
\end{proof}

\subsection{The first commutator}
\label{first.commutator}
Let $\psi \in \Dom(A) \cap \Dom(H)$.
Using the formula~\eqref{conjugate.tensor} with~\eqref{conjugate.parallel}
and integrating by parts, we compute
\begin{equation*}
\begin{aligned}
  \dot{b}[\psi]
  &= 2\Re (-\partial_x^2\psi-\partial_y^2\psi,
  F_n\partial_x\psi+\demi F_n'\psi)
  \\
  &= - \int_\Omega F_n \partial_x|\partial_x\psi|^2
  - \Re \int_\Omega F_n' \overline{\partial_x^2\psi} \psi
  - 2 \Re \int_\Omega F_n \overline{\partial_y^2\psi} \partial_x\psi
  - \Re \int_\Omega F_n' \overline{\partial_y^2\psi} \psi
  \\
  &= \int_\Omega F_n' |\partial_x\psi|^2
  + \int_\Omega F_n' |\partial_x\psi|^2
  + \frac{1}{2} \int_\Omega F_n'' \partial_x|\psi|^2
  + \int_\Omega F_n \partial_x |\partial_y\psi|^2
  + \int_\Omega F_n' |\partial_y\psi|^2
  \\
  &= 2 \int_\Omega F_n' |\partial_x\psi|^2
  - \frac{1}{2} \int_\Omega F_n''' |\psi|^2
  \\
  &= 2 \int_\Omega f_n |\partial_x\psi|^2
  - \frac{1}{2} \int_\Omega f_n'' |\psi|^2
  \,,
\end{aligned}
\end{equation*}
keeping in mind the properties of $F_n$ and $\psi \in \Dom(A) \cap \Dom(H)$.
For brevity, here we have stopped to write the measures of integration
in the integrals.

Since~$f_n$ is non-negative,
we immediately see that~$\dot{b}$ is bounded from below.
Explicitly,
\begin{equation*}
  \dot{b}
  \geq -\frac{\|f_n''\|_\infty}{2}
  = -\frac{\|f_1''\|_\infty}{2n^2}
  \,,
\end{equation*}
so the lower bound actually tends to~$0$ as $n \to \infty$.

Since $\dot{b}[\psi] = (\psi,\dot{B}\psi)$, where
\begin{equation*}
  \dot{B} := -2 \partial_x f_n(x) \partial_x - \frac{1}{2} f_n''(x)
  \,, \qquad
  \Dom(\dot{B}) := \Dom(A) \cap \Dom(H)
  \,,
\end{equation*}
is obviously a symmetric below bounded operator in $\sii(\Omega)$,
it follows that~$\dot{b}$ is closable
(see, \cite[Thm.~VI.1.2.7]{Kato}).
The closure~$b$ satisfies
\begin{equation*}
  b[\psi] = 2 \int_\Omega f_n |\partial_x\psi|^2
  - \frac{1}{2} \int_\Omega f_n'' |\psi|^2
  \,, \qquad
  \Dom(b) = \left\{
  \psi \in \sii(\Omega) \, \big| \ \sqrt{f_n} \, \partial_x\psi \in \sii(\Omega)
  \right\}
  \,.
\end{equation*}

By the representation theorem, we have
\begin{equation*}
  B = -2 \partial_x f_n(x) \partial_x - \frac{1}{2} f_n''(x)
  \,, \qquad
  \Dom(B) = \left\{\psi \in \Dom(b) \, \big | \
  \partial_x (f_n \partial_x \psi) \in \sii(\Omega) \right\}
  \,.
\end{equation*}
It is evident that $\Dom(H) \subset \Dom(B)$.

Summing up, in this subsection, we have established property~(c).

\subsection{The second commutator}

Here, we follow the same lines as in the Sec.~\ref{first.commutator}. Let $\psi \in \Dom(A) \cap \Dom(H)$ and compute

\begin{equation*}
  \dot{c}[\psi] =  2\Re (-2 \partial_x f_n(x) \partial_x \psi - \frac{1}{2} f_n''(x)\psi,
  F_n\partial_x\psi+\demi F_n'\psi)
  \,.
\end{equation*}
First, consider
\begin{equation*}
\begin{aligned}
- 4\Re  \int_\Omega  (\partial_x f_n(x) \partial_x \psi)
  F_n \overline{\partial_x\psi} &=- 4   \int_\Omega   f'_n(x) F_n(x)\vert \partial_x \psi \vert^2 -2
 \int_\Omega   f_n(x) F_n(x) \partial_x \vert \partial_x \psi
 \vert ^2  \\
  & =2  \int_\Omega   f^2_n(x) \vert \partial_x \psi \vert^2 - 2
 \int_\Omega   f'_n(x) F_n(x)  \vert \partial_x \psi
 \vert ^2 .
 \end{aligned}
 \end{equation*}
 Then,
 \begin{equation*}
\begin{aligned}
- 2\Re  \int_\Omega  (\partial_x f_n(x) \partial_x \psi)
  F'_n \overline {\psi} &=-    \int_\Omega   f'_n(x) F'_n(x) \partial_x  \vert \psi  \vert^2  -2\Re
 \int_\Omega   f_n(x) F'_n(x) (\partial^2_x \psi)  \overline {\psi}
  \\
  & =  \int_\Omega  f'_n (x)f_n(x)\partial_x   \vert \psi \vert^2 + 2
 \int_\Omega   f^2_n(x)\vert \partial_x \psi
 \vert ^2
 \\
 &= -  \int_\Omega  (f'_n (x)f_n(x))' \vert \psi \vert^2 + 2
 \int_\Omega   f^2_n(x)\vert \partial_x \psi  \vert ^2 .
\end{aligned}
 \end{equation*}
 We also have
 \begin{equation*}
\begin{aligned}
- \Re  \int_\Omega  f''_n(x)F_n( x)  \psi \partial_x  \overline{\psi} = - {\frac{1}{2}}  \int_\Omega  f''_n(x)F_n( x)  \partial_x  \vert \psi \vert^2 =  {\frac{1}{2}}  \int_\Omega  (f''_n(x)F_n( x))'    \vert \psi \vert^2 .
 \end{aligned}
 \end{equation*}

 Finally  we get
   \begin{equation}
\begin{aligned}\label{fin}
 \dot{c}[\psi]= 4  \int_\Omega   f^2_n(x) \vert \partial_x \psi \vert^2 &- 2
 \int_\Omega   f'_n(x) F_n(x)  \vert \partial_x \psi
 \vert ^2   \\
 &- \int_\Omega   \big (f''_n(x)f_n( x) + f'_n (x)^2 -   {\frac{1}{2}} f'''_n(x)F_n( x) \big) \vert \psi  \vert^2.
  \end{aligned}
  \end{equation}
By Proposition \ref{Prop.regularity}, $\dot{c}$ is continuous
in the graph norm associated with $H$
and so extends continuously to the form $c$
defined again by the equation~(\ref{fin}) on $\Dom(H)$.
Then it defines a bounded map $ C \in \mathscr{B}(\Dom(H),\Dom(H)^*) $,
and the statement (d) is proved.

\subsection{The Mourre estimate}
Finally, we are concerned with the essential condition~(e).
We rewrite the restriction of~$B$ as follows
\begin{equation}\label{Bs}
\begin{aligned}
  B \upharpoonright \Dom(H) &= Hf_n +f_nH  + 2 f_n \partial^2 _y +\frac{1}{2} f_n''
  \\
  &= 2E
  +  \underbrace{(H-E)f_n +f_n(H-E)}_{B_1}
  + 2 \underbrace{f_n \partial^2 _y}_{B_2}
  + \underbrace{2E(f_n-1) + \frac{1}{2}f_n''}_{B_3}
  \,.
\end{aligned}
\end{equation}
Now, we look at the individual terms
and try to eventually estimate $P_\delta B P_\delta$
from below by a positive multiple of~$P_\delta$
plus a compact operator sandwiched between the projections~$P_\delta$s.

\subsubsection{Operator {$B_1$}}
For every $\varphi \in \sii(\Omega)$, we have
\begin{equation*}
\begin{aligned}
  | (\varphi, P_\delta B_1 P_\delta  \varphi) |
  &\leq \Vert P_\delta  \varphi \Vert^2
  \left(
  \Vert P_\delta(H-E)f_nP_\delta \Vert   + \Vert P_\delta f_n(H-E)P_\delta \Vert
  \right)
  \\
  &\leq 2 \Vert P_\delta  \varphi \Vert^2  \Vert P_\delta(H-E) \Vert
  \\
  &\leq  2 \delta \Vert P_\delta   \varphi \Vert^2
  \,.
\end{aligned}
\end{equation*}
Here we have used the spectral theorem at the last estimate.
Hence, this term can be made negligible by choosing~$\delta$ small
and we shall estimate it as
\begin{equation*}
  P_\delta B_1 P_\delta
  \geq - 2 \delta P_\delta
  \,.
\end{equation*}

\subsubsection{Operator {$B_2$}}

We demonstrate our approach on $ T^+ :=  P_\delta  f_n^+ \partial^2 _y P_\delta$;
the operator $ T^- :=  P_\delta  f_n^-\partial^2 _y P_\delta$
can be handled in a similar way.
At the same time, let us suppose that $E_l < E < E_{l+1}$.

Let~$H^+$ be the self-adjoint realization of the Laplacian in $L^2(\Omega)$,
subject to the Dirichlet boundary conditions on  $ \mathbb R  \times \{d\}$
and the Neumann boundary condition on $\mathbb R \times \{0\}$.
Let $\{\psi_k\}_{k\in\Nat^*}$ be the eigenfunctions of the one-dimensional
Laplacian in $\sii((0,d))$, subject to the Neumann boundary condition at~$0$
and the Dirichlet boundary condition at~$d$.
We define
\begin{equation*}
  \Pi_k^+\varphi(x,y)
  := \psi_k(y) \big(\psi_k,\varphi(x,\cdot)\big)_{\sii((0,d))}
  \,,
\end{equation*}
the projection on the $k$th transverse mode of~$H^+$.
We have
\begin{equation} \label{1}
  T^+ =  P_\delta ( - \sum_{k=1}^l E_k f_n^+\Pi^+_k    + R^+ )P_\delta
\end{equation}
with
\begin{equation} \label{2}
  R^+ := \sum_{k \geq l+1} -E_k  P_\delta  f_n^+ \Pi^+_k P_\delta
  \,.
\end{equation}
Note that the operator $R^+$ is not compact.
Denote by  $h_k^+ =  -\partial_x^2 \otimes 1 +  E_k$
the restriction of  $H^+ $ on  $ \Pi^+_kL^2(\Omega)$.
Let $Z :=E +i \eta$ with $\eta >0$. We have  for any $m \in \mathbb N^*$,
\begin{equation*}
  (h_k^+ -Z)^m f_n^+ \Pi^+_k = \Pi^+_k (H^+ -Z)^mf_n^+
  =  \Pi^+_k (H -Z)^m f_n^+
\end{equation*}
on the domain of the right-hand side.
Now, let us choose $ \eta := \delta  $.
If $k\geq l+1$, then
\begin{equation*}
  \Vert(h_k^+ -Z)^{-m}  \Pi^+_k \Vert
  \leq  (E_k- E)^{-m} \,.
\end{equation*}
At the same time, if $ k \geq l+1 $, we  have
\begin{align} \label{4}
  P_\delta \Pi^+_k f_n^+  P_\delta
  & = P_\delta (h_k^+ -Z)^{-m}  \Pi^+_k  (H -Z)^m  f_n^+  P_\delta
  \\
  &=  P_\delta (h_k^+ -Z)^{-m}  \Pi^+_k  f_n^+ (H -Z)^m  P_\delta
  +  P_\delta (h_k^+ -Z)^{-m}  \Pi^+_k   [ (H -Z)^m , f_n^+] P_\delta
  \,.
  \nonumber
\end{align}

The first term on the right-hand side of the second line of~\eqref{4}
can be estimated as
\begin{eqnarray} \label{5}
  \Vert P_\delta (h_k^+ -Z)^{-m}  \Pi^+_k f_n^+ (H -Z)^m  P_\delta \Vert
  \leq C (E_k- E)^{-m} \delta^m.
\end{eqnarray}
Hereafter, $C$ denotes a generic strictly positive constant
which does not depend on the index~$k$ and on~$\delta$ (but depends on fixed $E_{l+1}-E$)
and can change its value from line to line.
If $m\geq 2$,
we have
\begin{equation} \label{6}
\begin{aligned}
  \bigg\|
  \sum_{k \geq l+1} E_k P_\delta (h_k^+ -Z)^{-m}  \Pi^+_k  f_n^+ (H -Z)^m  P_\delta
  \bigg\|
  &\leq C
  \sum_{k \geq l+1 } E_k (E_k- E)^{-m}  \delta^m
  \\
  &\leq C \delta^m
  \,.
\end{aligned}
\end{equation}

Now, we turn to estimating the second term on the right-hand side
of the second line of~\eqref{4}.
We choose $m:=2$. We could improve the bound to be obtained
by choosing larger~$m$ but with more complicated calculations.
On the range of $P_\delta$, we have
\begin{equation*}
[ (H -Z)^2 , f_n^+] = 2 [(H-Z),f_n^+] (H-Z) + [(H-Z),[(H-Z),f_n^+]]
\end{equation*}
with
\begin{equation*}
\begin{aligned}
\phantom{}
   [(H-Z),f_n^+]
  &= -\big(\partial_x (f_n^+)' + (f_n^+)' \partial_x\big) \,,
  \\
   [(H-Z),[(H-Z),f_n^+]]
  &= \partial^2_x (f_n^+)''+ (f_n^+)'' \partial^2_x
  + 2 \partial_x (f_n^+)''\partial_x
  \,.
\end{aligned}
\end{equation*}
Noticing that the support of the derivative of~$f_n^+$ is compact
and not intersecting $\{x=0\}$, we use Proposition~\ref{Prop.regularity}
to obtain
\begin{eqnarray} \label{7}
  \left\Vert  [ (H -Z)^2 , f_n^+] P_\delta  \right\Vert
  \leq C ( n^{-1} \delta   + n^{-2})
  \,.
\end{eqnarray}
Consequently,
\begin{eqnarray} \label{8}
  \bigg\|
  \sum_{k \geq l+1} E_k P_\delta (h_k^+ -Z)^{-2}  \Pi^+_k
  [ (H -Z)^2 , f_n^+] P_\delta
  \bigg\|
  \leq C
  ( n^{-1} \delta + n^{-2})
  \,.
\end{eqnarray}

Summing up, we have proved that
for~$\delta$  small  and $n$~large,
\begin{eqnarray} \label{9}
  \Vert  R^+  \Vert
  \leq C (  n^{-1} \delta   +  \delta^2 + n^{-2})
  \,.
\end{eqnarray}
When analyzing~$T^-$, we consider~$H^-$ which is defined in the same
manner as~$H^+$ but with interchanged boundary conditions.
The corresponding projections~$\Pi_k^-$ and the operator~$R^-$
are defined with an obvious modification of the formulas above.
By using the same arguments as above,
we get the same estimate~\eqref{9} for~$R^-$.
Writing $R:= R^+  +R^-$,
\begin{equation*}
  T^+  +T^-
  = P_\delta  \left(- \sum_{k=1}^lE_k( \Pi^+_k f_n^+  +  \Pi^-_kf_n^-) + R\right) P_\delta.
\end{equation*}
However, since $ [ \Pi^{ \pm}_k, f_n^{ \pm}]=0$,  $\sum_{k=1}^lE_k \Pi^{\pm}_k f_n ^{\pm} \leq E_l  (\sum_{k=1}^l \Pi^{\pm}_k) f_n^{\pm}$
$\leq E_lf_n^{\pm}$,
 and we conclude with the estimate
\begin{equation} \label{10}
  P_\delta B_2 P_\delta \geq - P_\delta  (E_l
  + C \big( n^{-1} \delta +  \delta^2 + n^{-2})
  \big)P_\delta
\end{equation}
being valid in the form sense.

\subsubsection{Operator {$B_3$}}
The operator $P_\delta B_3 P_\delta$  is not small.
However, since the function $g_n:=2E(f_n-1) + \frac{1}{2}f_n''$ has a compact support,
it follows that $g_n H^{-1}$ is a compact operator.
This is seen form the fact that ${\mathsf R}(g_n H^{-1}) \subset H^1((-2n,2n) \times (0,d))$ which is compactly embedded in $L^2((-2n,2n) \times (0,d))$
by the Rellich-Kondra\-chov theorem
(see \cite[Thm.~6.2]{Adams}).
Now
\begin{equation*}
  K_n := P_\delta B_3 P_\delta = P_\delta B_3 H^{-1} H P_\delta
\end{equation*}
is also a compact operator.
Note that  the presence of~$B_3$ (its part $f_n-1$) in~\eqref{Bs}
is the  only obstruction to get a \emph{strict} Mourre estimate (i.e., with $K=0$).

\subsubsection{Conclusion}
If $E_l < E < E_{l+1}$, it follows from the preceding subsections that
for~$\delta$ small and~$n$ large,
the Mourre estimate
\begin{equation} \label{final}
  P_\delta i[H,A] P_\delta
  \geq P_\delta  \big(  2(E- E_l - \delta)
  -C (  n^{-1} \delta   +  \delta^2 +
   n^{-2} ) +  K_n \big )  P_\delta
\end{equation}
holds true,
where $K_n$ is a compact operator.

We have verified all the properties (a)--(e) required for
the application of the abstract theorem of~\cite{Mourre}.
Since~$\mathcal{T}$ is a discrete set,
this concludes the proof that the singular continuous spectrum of~$H$ is empty.

In fact, our result gives more information. In particular, the limiting absorption principle holds for every energy
$E\in \Real \setminus \mathcal{T}$; see \cite{Mourre} or \cite{Amrein}.
\\
\\
{\bf Acknowledgement}
\\
The second author (J.D.)
was supported by the Czech Science Foundation, Project No.\ 17-01706S
and the NPI CAS institutional support, No.\ RVO 61389005.
The last author (D.K.) was partially supported
by the Czech Science Foundation, Project No.\ 18-08835S.

\end{document}